\documentclass[letterpaper]{article} 
\usepackage{aaai2026}  
\usepackage{times}  
\usepackage{helvet}  
\usepackage{courier}  
\usepackage[hyphens]{url}  
\usepackage{graphicx} 
\urlstyle{rm} 
\usepackage{natbib}  
\usepackage{caption} 
\frenchspacing  
\setlength{\pdfpagewidth}{8.5in} 
\setlength{\pdfpageheight}{11in} 
%

\usepackage[ruled,vlined,linesnumbered]{algorithm2e}
\usepackage{algpseudocode}

\usepackage{amsthm}
\usepackage{amsmath}
\usepackage{cleveref}

\usepackage[modulo,switch]{lineno} 

%
%
\pdfinfo{
/TemplateVersion (2026.1)
}

\newtheorem{theorem}{Theorem}

\theoremstyle{definition}
\newtheorem{definition}{Definition}
\newtheorem{property}{Property}

\newtheorem{lemma}{Lemma}

\usepackage{amssymb}

\setcounter{secnumdepth}{2} 

%


\title{BTPG-max: Achieving Local Maximal Bidirectional Pairs for Bidirectional
Temporal Plan Graphs}
\author{
    Yifan Su,
    Rishi Veerapaneni,
    Jiaoyang Li
}
\affiliations{


    Carnegie Mellon University\\
    
    \{yifansu, rveerapa\}@andrew.cmu.edu, jiaoyangli@cmu.edu
%
}

\usepackage{bibentry}

\begin{document}

\maketitle

\begin{abstract}
Multi-Agent Path Finding (MAPF) requires computing collision-free paths for multiple agents in shared environment. Most MAPF planners assume that each agent reaches a specific location at a specific timestep, but this is infeasible to directly follow on real systems where delays often occur. To address collisions caused by agents deviating due to delays, the Temporal Plan Graph (TPG) was proposed, which converts a MAPF time dependent solution into a time independent set of inter-agent dependencies. Recently, a Bidirectional TPG (BTPG) was proposed which relaxed some dependencies into ``bidirectional pairs" and improved efficiency of agents executing their MAPF solution with delays. Our work improves upon this prior work by designing an algorithm, BPTG-max, that finds more bidirectional pairs. Our main theoretical contribution is in designing the BTPG-max algorithm is locally optimal, i.e. which constructs a BTPG where no additional bidirectional pairs can be added. We also show how in practice BTPG-max leads to BTPGs with significantly more bidirectional edges, superior anytime behavior, and improves robustness to delays. 
\end{abstract}


\section{Introduction}

Multi-Agent Path Finding (MAPF) is the problem of finding collision-free paths for a set of agents in a shared workspace. MAPF is one core component of intelligent multi-agent teams and has direct applications in systems with many robots like warehouse management. 

Solving an MAPF instance, in theory, gives a MAPF solution which is a set of paths for robots to follow. 
This assumes that agents can strictly follow the paths, requiring that they arrive at specific locations at specific timesteps.
However, in practice, this is impossible as robots may be delayed during execution (e.g., due to kinematic constraints or wheel slippage). If agents naively follow their original paths without accounting for delays and other temporal differences, they may collide with other agents or get stuck in deadlocks.

To handle temporal differences in execution, Hönig et al. \shortcite{hoenig_multi-agent_2016_tpg} introduced a Temporal Plan Graph (TPG). 
The main innovation is that a MAPF solution can be processed into a set of dependencies. 
In particular, agents are only dependent on each other when their paths spatially intersect (e.g., location E in \Cref{fig:TPG and BTPG}). At these intersections, denoted as conflict locations, the inter-agent dependency is that agents must maintain their relative passing order specified in the MAPF solution. For example, in \Cref{fig:TPG and BTPG}, the green agent should pass location E first. During execution, agents can go as fast/slow as long as they satisfy this inter-agent dependency. The TPG paper proves how following these dependencies, regardless of time, will still lead to collision-free and deadlock-free execution.

However, a TPG can be overly restrictive. In \Cref{fig:TPG and BTPG}, during execution, if the green agent encounters a delay, the TPG would make the blue agent wait for the green agent to pass through location $E$ first. Consequently, a delay of the green agent also leads to a delay for the blue agent. To address this issue, the Bidirectional Temporal Plan Graph (BTPG) \cite{su2024bidirectional_tpg} was introduced, allowing the blue agent to choose to pass through $E$ first in such situations to avoid unnecessary waiting.

\begin{figure}[t]
    \centering
\includegraphics[width=1.0\linewidth]{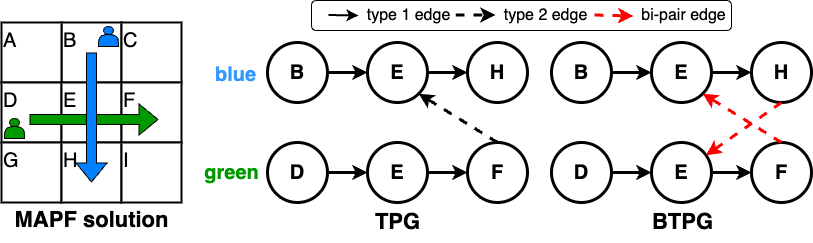}
    \caption{Example of TPG and BTPG for a MAPF solution that requires the green agent to pass through E first. The TPG requires blue to wait for green (even if green gets delays) while the BTPG allows either agent to cross the intersection (thus leading to more robustness to delays).}
    \label{fig:TPG and BTPG}
    \vspace{-1.2em}
\end{figure}

BTPG introduces \emph{bidirectional pairs} (or “bi‑pairs”), which are pairs of opposite dependency edges at a conflict location that allow either agent to go first without causing deadlock. In the original BTPG construction, each dependency in the TPG is tested for reversibility: if reversing an edge does not introduce a deadlock, then the agent can safely pass through the conflict point first—even if the MAPF solution originally scheduled it to go later. These reversible edges form bi‑pairs. Although any cycle in a TPG indicates deadlock, the BTPG paper observed that some cycles in the BTPG graph would not lead to deadlock. Leveraging this, they proposed \emph{BTPG‑o}, which applies hand‑crafted rules to detect and exclude such non‑deadlocking cycles to find more bi‑pairs. 

In this work, we point out that BTPG‑o still misses certain non‑deadlock cycle types.
To address this, we propose \emph{BTPG‑max}, which identifies \emph{all} non‑deadlock cycles to discover the locally maximal number of bi‑pairs. Our experiments show that BTPG‑max finds more bi‑pairs, yielding faster execution times and markedly better anytime performance.

\section{Preliminary}

\subsection{Problem Formulation}
The MAPF problem takes as input a graph and a set of start and goal locations for a group of agents. The MAPF solver provides each agent with a collision-free path, allowing them to move simultaneously to their goal locations on the graph. 
In the standard MAPF set-up, time is discretized into timesteps, with solution paths dictating that agents need to be at specific locations at specific timesteps. However, in actual execution, robots cannot meet this requirement due to hardware or communication issues that can cause delays. If agents followed the paths without adjusting for delays, they could collide with each other.
We are thus interested in determining how agents should handle delays during execution. 

Our work directly builds on top of TPG \cite{hoenig_multi-agent_2016_tpg} and BTPG \cite{su2024bidirectional_tpg}. Constructing a TPG or BTPG to handle delays is unique in that it is a post-processing technique on top of the MAPF solution that is done \textit{before} execution. This eliminates the need to replan during execution (avoiding additional delays caused by extra computation during execution), as agents only need to follow the dependencies specified by the (B)TPG.

\subsection{Temporal Plan Graph (TPG)}
As discussed in the introduction, TPG is a way to interpret a MAPF solution as a set of dependencies \cite{hoenig_multi-agent_2016_tpg}. We now formally define it.

\begin{definition}[TPG]
TPG is a directed acyclic graph $G=(V,E)$, where each vertex $v_i^m \in V$ corresponds to a state of agent $m$ being at location loc($v_i^m$). The index $i$ indicates that loc($v_i^m$) is the $i$-th location on the path of agent $m$. The edges in $E$ define the precedence dependencies between states and are categorized into two types: type-1 edges $E_1$ and type-2 edges $E_2$. A type-1 edge $(v_i^m, v_{i+1}^m)$ enforces that agent $m$ must reach loc($v_{i}^m$) before moving to loc($v_{i+1}^m$) . A type-2 edge $(v_i^m, v_j^n)$ enforces that agent $n$ can only reach loc($v_j^n$) after or at the same time as agent $m$ reaches loc($v_i^m)$.
\end{definition}

Conceptually, if all agents had paths that never intersected, the TPG would consist of only type-1 edges. A type-2 edge $(v_i^m, v_j^n)$ occurs when the paths of both agents $n$ and $m$ need to pass through conflict location loc($v_j^n$)$=$loc($v_{i-1}^m$) and $m$ should pass it first.

\subsubsection{TPG execution policy}
To move to the next state, an agent must satisfy all dependencies represented by all edges pointing to that state. 

For example, the TPG in \Cref{fig:TPG and BTPG} has a type-2 edge from $F$ to $E$. During execution, the blue agent can only enter $E$ once the $F \rightarrow E$ dependency is satisfied. This means that blue can only enter $E$ after green is entering or has already entered $F$.

By following a TPG during execution, we allow agents to travel more flexibly, as they are not tied to specific timesteps or intervals. Even if agents experience delays, following the TPG ensures that the passing order at each conflict location remains consistent with the MAPF solution. This enables them to execute the MAPF solution without collisions, deadlocks, or replanning when delays occur.

\subsection{Bidirectional Temporal Plan Graph (BTPG)}
A TPG requires agents to strictly follow the passing order at each location as specified by the MAPF solution. When delays occur, this can cause agents to unnecessarily wait and reduce efficiency. 
Bidirectional Temporal Plan Graphs (BTPGs) were proposed to address this issue \cite{su2024bidirectional_tpg}. Conceptually, at certain intersections where it is safe, e.g. \Cref{fig:TPG and BTPG}, BTPG replaces the fixed ordering and allows either of the agents to cross first. In the example, this means that either blue or green can cross the intersection first. This involves changing certain type-2 edges into bidirectional pairs and ensuring that the resultant BTPG is deadlock-free.

\begin{definition}[BTPG]
    Building on TPG, BTPG introduces the concept of bidirectional pairs $E_{pair}$. A bidirectional pair (bi-pair) consists of a pair of type-2 edges $\{(v_i^m, v_j^n), (v_{j+1}^n, v_{i-1}^m)\}$ with $\text{loc}(v_j^n) = \text{loc}(v_{i-1}^m)$ (denoted as conflict location). These two edges represent both options of agent $m$ going before $n$ and $n$ going before $m$ at the conflict location, respectively.
\end{definition}

\subsubsection{BTPG execution policy}
The BTPG execution policy is similar to the TPG execution policy but with a key difference. When agents encounter a bidirectional pair, only one edge is selected based on a ``first-come, first-served" rule. This means that either agent in the pair can be the first to enter the conflict location. When the first agent arrives, the edge that allows this agent to enter first is selected and the other edge is deleted. For example, in the BTPG of \Cref{fig:TPG and BTPG}, the red type-2 edges form a bi-pair. Now during execution, if green gets delayed, blue can select the $H \rightarrow E$ edge and enter $E$ first, and the edge $F \rightarrow E$ is discarded. This avoids the unnecessary waiting that would have been required if we had followed the original TPG.

Importantly, an edge $(v_{i}^m, v_j^n)$ from a bi-pair will only be selected during execution when agent $m$ reaches $v_{i-1}^m$ first. This is crucial for additional BTPG optimizations.

\subsubsection{BTPG construction}
The method for building a BTPG starts with a TPG and is described in \Cref{algo:Constructing BTPG - baseline} without the blue lines. Each type-2 edge is checked one by one to determine whether it can be reversed to form a bi-pair (\Cref{line:for-loop}, \Cref{algo:Constructing BTPG - baseline}). If reversing the edge doesn't cause a deadlock, the agents could reverse passing orders during execution. 
In a TPG, the presence of a cycle leads to a deadlock~\cite{berndt_feedback_2020, coskun_deadlock-free_2021}.
Thus the way to check for a deadlock is to see if the reversed edge forms a cycle (\Cref{line:call-hasCycle}, \Cref{algo:Constructing BTPG - baseline}).
If not, the type-2 edge is converted into a bi-pair. 

\subsection{BTPG-optimized (BTPG-o)}
However, due to the existence of bi-pairs in a BTPG, agents choose certain type-2 edges during execution, so some bi-pair edges in the cycle may provabely never be selected to execute. This means that certain cycles will not be encountered during execution and will not lead to a deadlock, and are thus called ``non-deadlock cycles".

\begin{property}[Non-deadlock cycles in BTPG-o]
\label{prop:BTPG-o}
 If a cycle contains a vertex $v_i^n$ and a bi-pair edge that points from $v_j^n, j>i$, then this cycle will not lead to a deadlock, and we call it a non-deadlock cycle.
\end{property}

Conceptually, if the cycle involving $v_i^n$ can cause a deadlock, agent $n$ will get stuck at $v_{i-1}^n$ (or even earlier), preventing it from reaching all subsequent vertices $v_j^n, j \ge i$. So, according to our BTPG execution policy, the bi-pair edge that originates from $v_j^n, j>i$ can not be selected as $n$ cannot reach $v_{j-1}^n$. Hence, this cycle will never be encountered and will not cause a deadlock during execution. For example, the cylce $v_d^3 \rightarrow v_b^2 \rightarrow \cdots \rightarrow v_{c-1}^2 \rightarrow v_c^2 \rightarrow v_a^1 \rightarrow v_f^4 \rightarrow v_d^3$ is a non-deadlock cycle, because the edge $(v_c^2, v_a^1)$ is a bi-pair edge and $v_b^2, b<c$ is in the cycle. The BTPG-o algorithm uses this property and excludes these non-deadlock cycles, allowing it to find more bi-pairs that make the result BTPG more flexible.



In our work, we identify more types of cycles that do not lead to deadlocks. Section \ref{sec:non-deadlock-cycle-btpg} discusses the complete set of \emph{all} types of non-deadlock cycles. Section \ref{sec:btpg-max} then describes our BTPG-max algorithm, 
which is based on this insight and finds the locally maximal number of bi-pairs by ignoring all non-deadlock cycles. 

\begin{figure}[t!]
    \centering
    \includegraphics[width=0.9\linewidth]{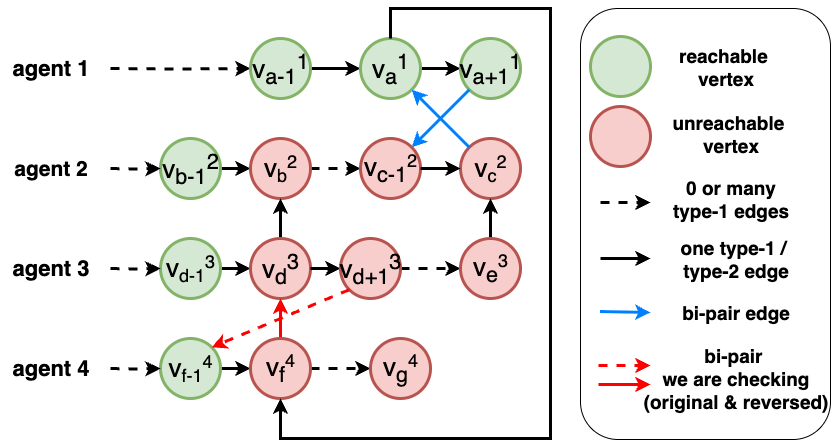}
    \caption{Example of new types of non-deadlock cycle discussed in Section \ref{sec:non-deadlock-cycle-btpg}.}
    \label{fig: new non-deadlock cycle}
    \vspace{-1em}
\end{figure}

\subsection{Other Related Works}

Recent works have explored online switching of TPG (Temporal Plan Graph) edge dependencies to handle delays, leveraging the key fact that cycles in a TPG imply deadlocks \cite{berndt_feedback_2020, coskun_deadlock-free_2021}. These methods aim to switch dependencies without introducing cycles. Switchable-Edge Search (SES) \cite{feng2024ses} uses heuristic search over type-2 edges to minimize expected cost based on current agent positions and delays. Berndt et al. \shortcite{berndt2023receding} propose a similar Switchable Action Dependency Graph and solve for switchable edges using receding horizon planning formulated as a Mixed-Integer Linear Programming problem. Liu et al. \shortcite{liu2024greedy_ldg_swap} greedily adjust edges for blocked agents to maximize agent progress, ensuring no cycles via DFS checks. Another approach re-solves MAPF while constraining agents to original paths \cite{kottinger2024delays_acid}, which scales for isolated delays but struggles with multiple. 

Separately, some methods plan delay-robust MAPF solutions, either explicitly allowing up to $K$ delays \cite{atzmon_robust_2021, chen2021robust_symmetry} or incorporating delay probabilities \cite{ma_multi-agent_2017}, though these require prior delay models. Space-Level CBS \cite{wagner_minimizing_2022} minimizes agent wait events, while Space-Order CBS \cite{wu2024spacetimespaceorderdirectlyplanning} directly computes TPGs to reduce type-2 edges.

Our method is orthogonal to these methods and could be used on top of them.

\section{Non-deadlock Cycles in BTPG} \label{sec:non-deadlock-cycle-btpg}

Our objective in this section is to find all \emph{non-deadlock cycles} in BTPG. These are a cycle of vertices and edges in a BTPG that will never lead to deadlocks during execution under the BTPG execution policy. Analogously, a \emph{deadlock cycle} in a BTPG is a cycle of vertices and edges that has a non-zero probability to lead to a deadlock during execution.

In the analysis of \Cref{prop:BTPG-o}, we claimed a cycle to be non-deadlock if a bi-pair edge in the cycle cannot be selected because the corresponding conflict location is ``unreachable" by the deadlocked agent. Our work expands this idea: we will classify all vertices in BTPG as reachable and unreachable and denote those cycles that have bi-pair edges with unreachable conflict locations as non-deadlock.

\begin{definition}[Reachable and unreachable vertex]
\label{def: reachable vertex}
    When a deadlock occurs, all vertices where agents might be located are reachable vertices, and all vertices where they cannot be located are unreachable vertices.
\end{definition}

Conceptually, reachable vertices are those before deadlocks. Unreachable vertices are those after deadlocks (since the agent is in a deadlock, it cannot reach future vertices) and, importantly, vertices who are dependent via type-1 and type-2 edges on other unreachable vertices.
For example, consider the cycle $v_d^3 \rightarrow v_b^2 \rightarrow \cdots \rightarrow v_{c-1}^2 \rightarrow v_c^2 \rightarrow v_a^1 \rightarrow v_f^4 \rightarrow v_d^3$ in \Cref{fig: new non-deadlock cycle}. If this cycle leads to a deadlock, then agent 2 can only reach up to vertex $v_{b-1}^2$. Agent 2 cannot reach future vertices like $v_{c-1}^2$. So, $v_{c-1}^2$ is unreachable. 

It is evident that if a vertex is in a cycle that constitutes a deadlock, then this vertex is unreachable. Consequently, all subsequent type-1 edges and normal type-2 edges cannot be satisfied. This will in turn cause those vertices to be unreachable, and we can cascade the logic down the graph. 

\begin{lemma}
\label{lem: unreachable}
    During execution, if a cycle leads to a deadlock, vertex $v_i^n$ is unreachable if and only if it is in the cycle or there is a path from a vertex $v_j^m$ in the cycle to it only through type-1 edges and normal type-2 edges (i.e., not type-2 edges in bi-pairs).
\end{lemma}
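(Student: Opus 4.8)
The plan is to prove the two directions of the biconditional separately, organizing both around a single ``taint set'' $T$ defined as the set of cycle vertices together with their forward-closure under type-1 and normal (non-bi-pair) type-2 edges. The lemma then asserts exactly that the set $U$ of unreachable vertices equals $T$. Throughout I condition on the stated hypothesis that the given cycle $C$ genuinely leads to a deadlock, so that every edge of $C$---including any selected bi-pair edge---is actively imposing its dependency.

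For the ``if'' direction ($T \subseteq U$) I would argue by forward induction along the edge directions. First, every vertex of $C$ is unreachable: by the cited fact that a cycle in a (B)TPG constitutes a deadlock, each participating agent is blocked immediately before the first cycle-vertex on its path, so reaching any cycle vertex would require having already reached its predecessor around the loop, which is impossible. Next I propagate: if the source of a type-1 edge is unreachable then so is its target (an agent cannot skip a location on its own path), and if the source of a normal type-2 edge is unreachable then so is its target (the dependency ``$n$ reaches $v_j^n$ only after $m$ reaches $v_i^m$'' can never be satisfied). Inducting on the length of the type-1/normal-type-2 path from $C$ then shows every vertex of $T$ is unreachable. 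Bi-pair edges are deliberately excluded from this propagation because they are switchable: an unreachable bi-pair source does not block its target, since the downstream agent may instead select the reverse edge and cross the conflict location first.

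For the ``only if'' direction ($U \subseteq T$) I would prove the contrapositive, that every untainted vertex is reachable. The key structural fact is that the subgraph of type-1 and normal type-2 edges is a DAG, being a subgraph of the original acyclic TPG, so I can induct on its topological order. For an untainted vertex $v$, each type-1 or normal type-2 predecessor must also be untainted---otherwise $v$ would lie in the forward-closure $T$---so by the induction hypothesis every such predecessor is reachable and all of $v$'s hard dependencies are satisfiable. For incoming bi-pair edges I invoke the escape argument: a bi-pair imposes no fixed constraint because $v$'s agent can be made to arrive first at the conflict location, causing the reverse edge to be selected and the incoming bi-pair edge to be discarded. Hence none of $v$'s dependencies are permanently blocking, and there is an execution (still ending in the given deadlock) that reaches $v$, so $v \notin U$.

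The main obstacle is the bi-pair handling in the ``only if'' direction, in two respects. First, I must make the escape argument precise against the execution policy's rule that a bi-pair edge $(v_i^m, v_j^n)$ is selected only when agent $m$ reaches $v_{i-1}^m$ first; I need to show $v$'s agent can always win this race while remaining consistent with $C$ still deadlocking. Second, I must address bi-pair edges lying on $C$ itself: since $C$ is assumed genuine, such an edge is selected, so its conflict location was reached and its source side is reachable---otherwise the edge could not have been selected and $C$ would not deadlock. This consistency, which is precisely the mechanism the later non-deadlock test exploits, guarantees that no unreachability is ever forced through a bi-pair edge, so the type-1/normal-type-2 forward-closure $T$ captures $U$ exactly.
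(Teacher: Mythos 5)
Your proposal is correct and follows essentially the same route as the paper's proof: unreachability is propagated forward from the cycle through type-1 and normal type-2 edges only, and the converse rests on the observation that an incoming bi-pair edge never forces unreachability because the other edge of the pair can be selected and the offending path discarded. You merely add explicit induction scaffolding (taint set, topological order) and flag the race-winning subtlety at bi-pairs, where the paper argues the same two cases informally by contradiction.
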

\begin{proof}
For the first case, because $v_i^n$ is in the deadlock cycle, it is definitely unreachable (otherwise the agent would not be in deadlock there). For the second case, since $v_j^m$ is in the cycle and thus unreachable, all type-1 edges and normal type-2 edges along the path from $v_j^m$ to $v_i^n$ cannot be satisfied, indicating that $v_i^n$ is also unreachable. 

We prove the other direction by contradiction: Assume that the unreachable vertex $v_i^n$ is not in the cycle and that there is no such path. Then, either (1) there is no path from a vertex in the cycle to  $v_i^n$, or (2) there is a path passing through some bi-pair edges from a vertex in the cycle to  $v_i^n$. 
In case 1, all edges pointing to vertex $v_i^n$ can be satisfied, meaning that $v_i^n$ is indeed reachable.
In case 2, based on the BTPG execution policy, the other edge in the bi-pair could be selected, in which case the current edge would be discarded, making that path disappear. 
Both cases contradict the assumption that $v_i^n$ is unreachable.
\end{proof}

Next, we determine whether the conflict vertex $v_j^m$ of a bi-pair edge within the cycle is unreachable. If it is, this bi-pair edge will not be selected, resulting in a non-deadlock cycle.

\begin{theorem}
\label{the: non-deadlock cycle}
    A cycle in BTPG is a non-deadlock cycle if and only if it contains a bi-pair edge $(v_i^m, v_j^n)$ such that $v_{i-1}^m$ is unreachable.
\end{theorem}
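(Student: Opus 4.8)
The plan is to prove the two directions separately, in each case leaning on the key BTPG selection property---that a bi-pair edge $(v_i^m,v_j^n)$ is selected only when agent $m$ reaches the conflict location $v_{i-1}^m$ first---together with \Cref{lem: unreachable}, which characterizes the unreachable set once we hypothesize that the cycle deadlocks. Throughout, recall that ``non-deadlock'' means the cycle \emph{never} produces a deadlock, whereas ``deadlock cycle'' means it does with nonzero probability, so the two conditions are complementary and it suffices to establish each implication.

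For the ``if'' direction I would argue by contradiction. Suppose $C$ contains a bi-pair edge $e=(v_i^m,v_j^n)$ with $v_{i-1}^m$ unreachable, yet $C$ is a deadlock cycle. Then there is an execution in which the cyclic wait along $C$ actually blocks, so every edge of $C$---in particular $e$---must be both present and binding. Present means $e$ was selected, which by the selection property forces agent $m$ to have reached $v_{i-1}^m$. Binding means $n$ is still waiting on $e$, i.e.\ $m$ has not yet reached $v_i^m$. The only way both hold is that $m$ is stuck exactly at $v_{i-1}^m$ at the moment of deadlock, so by \Cref{def: reachable vertex} the vertex $v_{i-1}^m$ is reachable---contradicting the hypothesis. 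The remaining cases are easy to dispose of: if $e$ is never selected then $C$ never contains $e$, and if $m$ has already passed $v_i^m$ then $e$ is satisfied and non-binding; in either case the cyclic wait is broken and $C$ cannot deadlock. Hence $C$ is non-deadlock.

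For the ``only if'' direction I would prove the contrapositive: if no bi-pair edge of $C$ has an unreachable conflict predecessor, then $C$ is a deadlock cycle. Every type-1 and normal type-2 edge is always present, so the only edges whose presence is in question are the bi-pair edges of $C$. For each such edge $(v_i^m,v_j^n)$ the predecessor $v_{i-1}^m$ is, by assumption, reachable; I would then exhibit a delay pattern under which each responsible agent $m$ arrives at its conflict location $v_{i-1}^m$ before its competitor $n$, thereby selecting the $C$-consistent bi-pair edge. With all bi-pair edges of $C$ so selected and all remaining edges always present, $C$ becomes a genuine cycle of active dependencies, which is a deadlock by the standard fact that a cycle in a TPG forces deadlock. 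Since such a delay pattern occurs with nonzero probability, $C$ is a deadlock cycle.

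The main obstacle is the joint realizability in this last step: I must show a single execution can simultaneously route every responsible agent $m$ to its conflict location $v_{i-1}^m$ and win the race against $n$, without one such routing blocking another. Here I would exploit the fixed-point structure of the unreachable set from \Cref{lem: unreachable}: ``reachable'' precisely means not lying in the deadlock-induced cascade, so none of the conflict locations $v_{i-1}^m$ lies downstream of the forming deadlock, and the agents can reach them in an order consistent with closing the cycle. Making this scheduling argument precise---and confirming that selecting all the $C$-consistent edges does not inadvertently trigger the opposite selection of some pair---is the technical heart of the proof.
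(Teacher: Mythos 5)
Your proposal takes essentially the same route as the paper: both directions are argued by contradiction, with the ``if'' direction resting on the BTPG selection rule that a bi-pair edge $(v_i^m,v_j^n)$ can only be selected once agent $m$ reaches $v_{i-1}^m$, so an unreachable $v_{i-1}^m$ makes the cycle impossible to encounter. For the converse, the paper's own proof is a one-sentence assertion---``all bi-pair edges in the cycle can be selected to execute \ldots agents may encounter this cycle during execution''---so the joint-realizability obstacle you flag (exhibiting a single delay pattern that wins every race consistently) is precisely the step the paper leaves implicit rather than something you are missing; your write-up is, if anything, more explicit about where the remaining work lies.
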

\begin{proof}
We first prove by contradiction that if the cycle satisfies the given conditions, it is a non-deadlock cycle. Assume that this cycle could lead to a deadlock. According to the BTPG execution policy, since $(v_{i-1}^m)$ is unreachable, the bi-pair edge could not be selected, meaning that the cycle would never be encountered during execution, leading to a contradiction. Thus, the cycle must be a non-deadlock cycle.
 
We then prove the other direction also by contradiction: Assume there is a non-deadlock cycle that does not satisfy the given conditions, meaning that all edges in this cycle can be executed (i.e., all bi-pair edges in the cycle can be selected to execute). Agents may encounter this cycle during execution, leading to a deadlock.
\end{proof}

Based on \Cref{the: non-deadlock cycle}, we can find new types of non-deadlock cycles that BTPG-o does not cover. For example, in \Cref{fig: new non-deadlock cycle} when evaluating if edge $(v_{d+1}^3, v_{f-1}^4)$ can be reversed to $(v_{f}^4, v_{d}^3)$, BTPG-o would classify the cycle $v_d^3 \rightarrow \cdots \rightarrow v_e^3 \rightarrow v_c^2 \rightarrow v_a^1 \rightarrow v_f^4 \rightarrow v_d^3$ as a deadlock cycle based on \Cref{prop:BTPG-o}. 
However, since $v_d^3$ is an unreachable vertex and there is a path from $v_d^3$ to $v_{c-1}^2$ only through type-1 and normal type-2 edges, $v_{c-1}^2$ is unreachable, and this cycle is a non-deadlock cycle as the bi-pair edge originating from $v_{c}^2$ cannot be chosen during execution.



\subsection{Reachable deadlock cycle}
So far we have discussed non-deadlock cycles and how they can be identified. This implicitly defines deadlock cycles (i.e. those cycles which are not non-deadlock cycles). However, we can define an even stricter subset of deadlock cycles that are faster to detect.

In a deadlock situation, no agent can proceed—each gets stuck at the vertex just before entering the deadlock cycle. We refer to such cycles as reachable deadlock cycles.

\begin{definition}[Reachable deadlock cycle in BTPG]
    \label{def: reachable deadlock cycle}
    A reachable deadlock cycle in BTPG is a cycle such that for every (normal or bi-pair) type-2 edge $(v_i^m, v_j^n)$, $v_{j-1}^n$ is where robot $n$ gets stuck during execution.
\end{definition}
This means that when a deadlock occurs, all the agents involved in the deadlock will stop at the vertices just one before the deadlock.
So in Definition~\ref{def: reachable deadlock cycle}, if any vertex like \( v_{j-1}^n \) is marked as unreachable, then the cycle is not a reachable deadlock cycle because stuck agents cannot be at those locations when deadlock happens (based on Definition~\ref{def: reachable vertex}). 
 With this stricter property of a deadlock cycle, we can speed up the cycle detection process. 
 
Back to the previous example, when we check the reversed edge $(v_f^4, v_d^3)$ and traverse the path $v_d^3 \rightarrow \cdots \rightarrow v_e^3 \nrightarrow v_c^2$ during cycle detection, we can skip the edge $(v_e^3, v_c^2)$. Because if $v_d^3$ is in the deadlock cycle, there is a path from $v_d^3$ to $v_{c-1}^2$ through only the type-1 and normal type-2 edges, then $v_{c-1}^2$ is unreachable. $v_{c-1}^2$ is unreachable means that $v_c^2$ cannot be in a reachable deadlock cycle, so we can skip the edge $(v_e^3, v_c^2)$.
This operation will not affect the number of bi-pairs found but can accelerate the algorithm. 

Every possible deadlock has a corresponding reachable deadlock cycle, so we can design an algorithm that identifies only reachable deadlock cycles to determine whether a type-2 edge can be reversed.


\begin{algorithm}[t!]
        \caption{BTPG-max: \textcolor{blue}{Blue} highlights the main difference from BTPG-o.}
        
        \label{algo:Constructing BTPG - baseline}
        \SetKwFor{While}{\fbox{while $E_{pair}$ has been updated do}}{}
        
        \SetKwProg{Fn}{Function}{:}{end}
        \KwIn{TPG $G=(V, E_1 \cup E_2)$} 
        \KwOut{BTPG $\mathcal{G}$}
        \LinesNumbered
        \texttt{$E_{pair} \gets \emptyset$}\tcp*{set of bi-pair edges}
        \textcolor{blue}{\texttt{$E_{group} \gets$ Grouping($E_2$)}}\;
            \For{\texttt{g} \textbf{in} \texttt{$E_{group}$} (or \textbf{until} \texttt{TimeOut}) \label{line:group for-loop}}{
                $\tilde{g} \gets \{(v_{j+1}^m, v_{i}^n)| (v_{i+1}^n, v_{j}^m) \in g\}$\;
                $E_{pair} \gets E_{pair} \cup \{g, \tilde{g}\}$\;
                \For{\texttt{$e=(v_{j+1}^m, v_{i}^n)$} \textbf{in} \texttt{$\tilde{g}$} \label{line:for-loop}}
                {
                    $\mathcal{G}\gets(V, E_{group} \cup E_{pair}, E_1\cup E_2)$\;
                    \texttt{\textcolor{blue}{StateStatus $\gets \{\text{'reachable'}|\forall v \in V$\}}\label{lin:initial}}\;
                    \textcolor{blue}{\texttt{ UpdateStateStatus}(\texttt{StateStatus}, $\mathcal{G}$, $v_i^n$)\;}\label{lin:update}
                    \If{\texttt{DeadlockCycleDetection}($\mathcal{G}$, $v_{i}^n$, $v_{j+1}^m$, $\{e\}$, \texttt{\textcolor{blue}{StateStatus}})\label{line:call-hasCycle}}
                    { 
                        $E_{pair} \gets E_{pair} \setminus \{g,\tilde{g}\}$\;
                        \textbf{break}\;
                    }
                }
            }
        
        \Return $\mathcal{G}=(V, E_1 \cup E_2\cup E_{pair})$\;
\end{algorithm}

\begin{algorithm}[t!]

        \caption{DeadlockCycleDetection: \textcolor{blue}{Blue} highlights the main difference from BTPG-o.}
        \label{algo: hasCycle}
        \SetKwFunction{MyFunction}{MyFunction}
        \SetKwFor{mIf}{\textcolor{blue}{if}}{\textcolor{blue}{then}}{\textcolor{blue}{end}}
        \SetKwProg{Fn}{Function}{:}{end}
        \KwIn{(1) BTPG $\mathcal{G} = (V, E_{group} \cup E_{pair}, E_1\cup E_2)$, (2) current vertex for expansion $v_{i}^n$, (3) origin vertex $v_o$, (4) set(s) of edges $E_{vis}$ along the current DFS branch, (5) the set of status of vertices along the current DFS branch.} 
        \KwOut{\texttt{true} or \texttt{false}}
        \LinesNumbered
  
        \If{$v_i^m = v_o$\label{line:cycle-found}}
                {
                \uIf{$E_{vis} \subseteq E_2$ \textbf{and} $ |E_{vis}| > 2 $\label{line:detect-rotation-cycle}}{\Return \texttt{false}\label{line:rotation-cycle}\;}
                \lElse{\Return \texttt{true}\label{line:non-rotation-cycle}}
                }
            \For{\texttt{$v_j^n$} \textbf{in} $\{v_j^m \in V \mid (v_i^n, v_j^m) \in E_1 \cup E_2 $\}} 
            {
                $e \gets (v_i^n, v_j^m)$\;
                $v_e^m \gets$ \texttt{getEarliestOutNode(e.group)} \label{lin: earlistNode}\;
                \If{\texttt{$e \in E_{pair}$}\label{line:e-in-pair}}
                {
                    \textcolor{blue}{\lIf{\texttt{StateStatus$(v_{e}^{m})\neq$'reachable'} \label{line:skip bi-pair}}
                    {\texttt{continue}}}
                    \textcolor{blue}{\texttt{UpdateStateStatus}(\texttt{StateStatus}, $\mathcal{G}$, $v_j^m$)\;}
                    
                }
                \textcolor{blue}{\lIf{\texttt{StateStatus$(v_{j-1}^n)\neq$'reachable'}\label{line:skip normal}}
                {\textcolor{blue}{\texttt{continue}}}}

                \lIf{\texttt{DeadlockCycleDetection}($\mathcal{G}$, $v_j^n$, $v_o$, $E_{vis} \cup \{e\}$, \texttt{StateStatus})}
                {\Return \texttt{true}}
            }
            \Return \texttt{false}\;
\end{algorithm}
\section{BTPG-max} \label{sec:btpg-max}
We propose BTPG-max, which, like BTPG-o, checks type-2 edges one by one and converts a type-2 edge into a bi-pair if adding the other edge in the bi-pair does not introduce deadlocks. The main difference between BTPG-o and  BTPG-max lies in cycle detection---BTPG-max considers only reachable deadlock cycles using \Cref{the: non-deadlock cycle} and \Cref{def: reachable deadlock cycle}. 

The main idea of cycle detection is to perform a depth-first search rooted at the edge we are checking to find a reachable deadlock cycle. To ensure that the cycles we identify meet the conditions of a reachable deadlock cycle, we propose requirements for traversing type-2 edges (normal or bi-pair) during cycle detection below. During cycle detection, we can omit those edges that do not satisfy the following requirements as those edges are guaranteed not in the current reachable deadlock cycle.

\subsection{Requirements of traversing a type-2 edge}

\subsubsection{Normal type-2 edges} 
According to \Cref{def: reachable deadlock cycle}, if type-2 edge \((v_i^n, v_j^m)\) is in a reachable deadlock cycle, then agent \(m\) must be in \(v_{j-1}^m\) when this deadlock occurs. Therefore, vertex \(v_{j-1}^m\) must be reachable for agent $m$.


\subsubsection{Bi-pair type-2 edges}
According to \Cref{the: non-deadlock cycle}, if edge \((v_i^n, v_j^m)\) is in a bi-pair, then in addition to vertex \(v_{j-1}^m\) being reachable, vertex \(v_{i-1}^n\) must also be reachable.
\vspace{-0.1em}
\subsection{Effects of traversing a type-2 edge}
After traversing a type-2 edge \((v_i^n, v_j^m)\) during cycle detection, meaning that we suppose that this edge is in a reachable deadlock cycle, it will lead to the following effect: All locations of \(v_k^m\) where \(k < j-1\) must have been visited and are considered as unreachable according to \Cref{def: reachable deadlock cycle}. In addition, based on \Cref{lem: unreachable}, all vertices that $v_j^m$ is connected to through only type-1 edges and normal type-2 edges are also unreachable. 

\begin{figure*}[t]
    \centering
    \vspace{-0.5em}
    \includegraphics[width=0.8\linewidth]{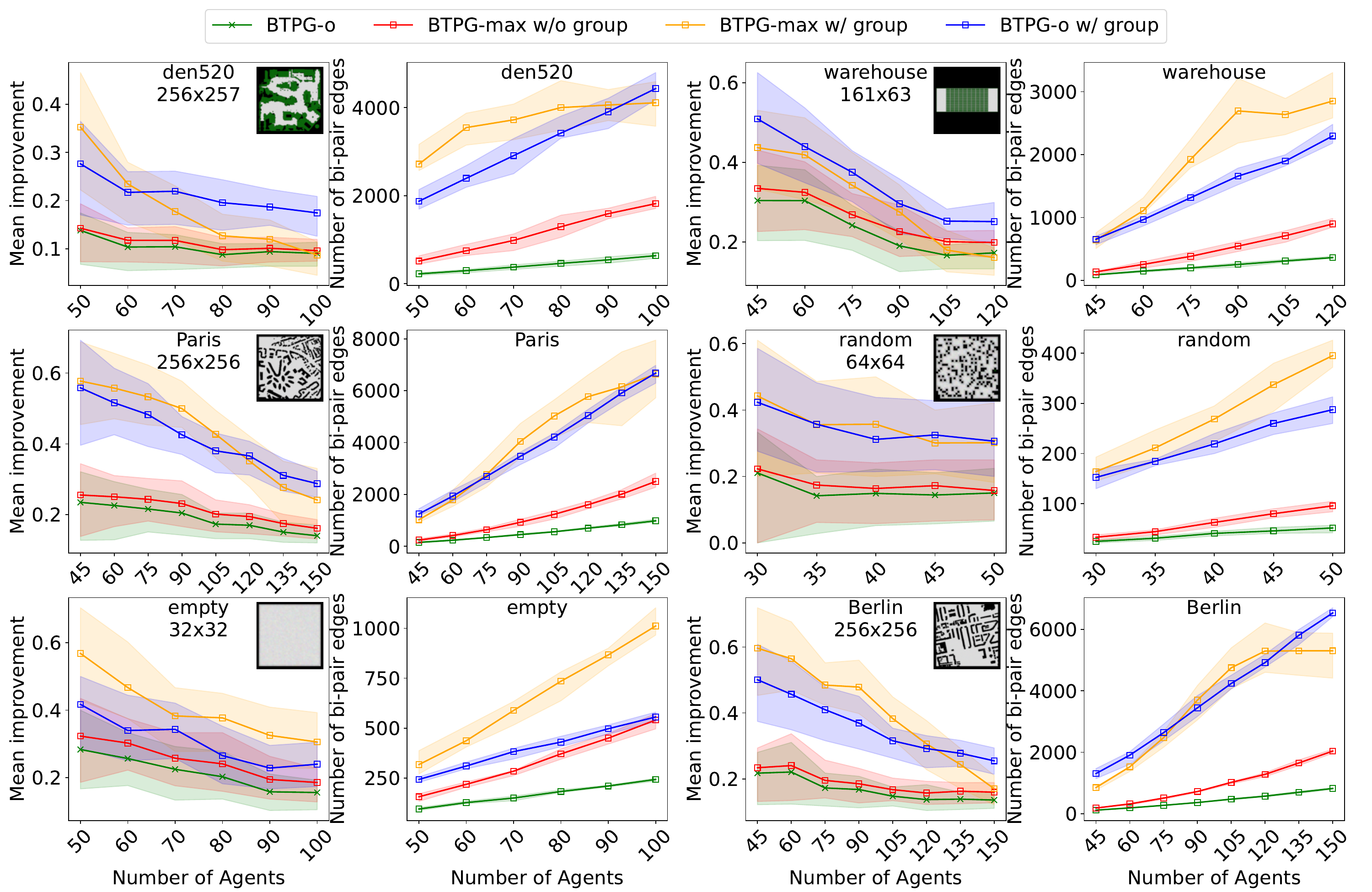}
    \vspace{-0.5em}
    \caption{Comparison of BTPG-o, BTPG-max w/o grouping, and BTPG-max w/ groups.}
    \label{fig:result}
    \vspace*{-1em}
\end{figure*}
\subsection{BTPG-max}
BTPG-max follows the general approach of BTPG construction: From a valid TPG, check each type-2 edge (\Cref{line:for-loop},\Cref{algo:Constructing BTPG - baseline}) to see if it forms a reachable deadlock cycle (\Cref{line:call-hasCycle}, \Cref{algo:Constructing BTPG - baseline}). If not, it is converted into a bi-pair. Since the construction approach being the same as BTPG-o, where each type-2 edge is checked one by one, BTPG-max inherits the anytime property of BTPG-o. This means that the algorithm can be stopped at any time, and the BTPG always remains valid. 

In BTPG-max, \texttt{StateStatus} records the status (reachable or unreachable) of each vertex in the BTPG and is used to check the requirements when traversing a type-2 edge. \texttt{UpdateStateStatus} will update \texttt{StateStatus} based on the effects of traversing a type-2 edge.

An important detail is that \texttt{UpdateStateStatus} can only pass through type-2 edges that have been checked not to be bi-pairs. In \Cref{fig: new non-deadlock cycle}, if edge $(v_d^3,v_b^2)$ is later converted into a bi-pair, even if this type-2 edge cannot be satisfied, according to the BTPG execution policy, agent 2 can choose the other direction and pass through node $v_b^2$ to reach $v_{c-1}^2$. In this case, the cycle $v_d^3 \rightarrow \cdots \rightarrow v_e^3 \rightarrow v_c^2 \rightarrow v_a^1 \rightarrow v_f^4 \rightarrow v_d^3$ becomes a deadlock cycle, reversing the previous result that allowed edge $(v_{d+1}^3,v_{f-1}^4)$ to be reversed.\footnote{Our appendix walks through BTPG-max on this example.}

Similar to BTPG-o, once a new bi-pair is found, it may allow previously non-convertible edges to be converted into bi-pairs because we may ignore more type-2 edges during cycle detection. Therefore, to find the locally maximal number of bi-pairs, a while loop is needed to repeatedly check if normal type-2 edges can be converted until no new bi-pairs are found. However, as mentioned above, transforming a normal type-2 edge into a bi-pair can also affect edges previously determined to be bi-pairs. Thus, in the second and subsequent iterations of this while loop, the condition for converting a normal type-2 edge into a bi-pair must include that it should not affect edges that were previously determined to be bi-pairs. However, in practical use, if this locally maximal property is not critical, one can implement a simpler version without the while loop. Empirically, we observed that the number of bi-pairs identified in subsequent iterations is usually very small and has a negligible impact on the results. We demonstrate this finding in Section \ref{sec:empirical-evaluation}. For clarity and simplicity, we have omitted the while loop in Algorithm 1's pseudocode.

\subsection{BTPG-max with grouping}
\label{sec:grouping}
BTPG-o identified two common groups of type-2 edges where each individual edge in the group cannot be converted into a bi-pair. The first group corresponds to agents following each other across consecutive locations (\Cref{fig:grouping} top row, agent 2 is following agent 1 across $BCD$), while the second group corresponds to agents crossing consecutive locations in opposite directions (\Cref{fig:grouping} bottom row, agent 1 crosses $BCD$ then agent 2 crosses $DCB$). Although individual edges cannot be reversed, all edges in the group can be reversed simultaneously \cite{berndt2023receding}. We incorporate this finding into our approach.

To determine if an edge group can be reversed, we need to reverse all edges in the group simultaneously and then verify that each one can be reversed. If any edge in the group cannot be reversed, the whole group cannot be reversed. 

In cycle detection, passing through an edge in one bi-pair edge group means that all edges in the group must be consistent with the direction it passes through. Therefore, we need to ensure that the position that allows the agent to choose this direction is reachable, i.e., the start of each group needs to be reachable. In \Cref{fig:grouping}, this corresponds to nodes B and D. More specifically, if the edges $(v_i^n, v_j^m), (v_{i+1}^n, v_{j\pm 1}^m), \cdots$ are in a bi-pair edge group and during cycle detection we want to pass through any one of them, then we need to check if the vertex $v_{i-1}^n$ (\Cref{algo: hasCycle}, \Cref{lin: earlistNode}) is reachable.

\subsection{Time Complexity}
\Cref{algo: hasCycle} is a recursive Depth-First Search (DFS), which visits each vertex and edge at most once and thus will be recursively called at most $O(|E|+|V|)$ times. Each recursive call has a time complexity of $O(|E|+|V|)$ due to \texttt{UpdateStateStatus} which is a separate DFS call. Thus, Alg 2 is $O((|E|+|V|)^2)$.

Now moving to \Cref{algo:Constructing BTPG - baseline}. \Cref{algo:Constructing BTPG - baseline} calls \texttt{DeadlockCycleDetection} a total of at most \( |E| \) times (since each edge needs to be checked).
Finally, \Cref{algo:Constructing BTPG - baseline} is invoked at most \( |E| \) times in the outer \texttt{while} loop (for repeatedly checking). 
Therefore, the total worst-case time complexity of the BTPG-max algorithm is:
$O((|E| + |V|)^2 \cdot |E|^2)$. We emphasize this is a worst-case time complexity and that in practice it can converge faster. Also, as mentioned earlier, BTPG-max is an anytime algorithm which allows it to be stopped early while still returning a valid BTPG.

\begin{figure}[t]
    \centering
    \includegraphics[width=0.85\linewidth]{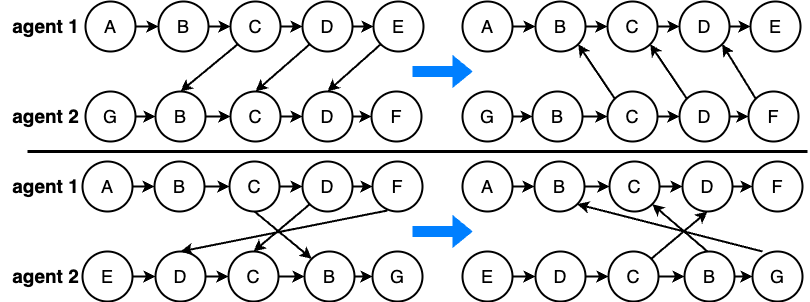}
    \caption{Two cases of grouping: before and after reversing.}
    \label{fig:grouping}
    \vspace*{-1em}
\end{figure}
\begin{table*}[th!]
    \centering
    \fontsize{9}{10}\selectfont
    \setlength{\tabcolsep}{4pt}
    \begin{tabular}{|c|r|r|r|r|r|r|r|r|r|r|r|r|}
    
    \hline
    &\multicolumn{2}{|c|}{\texttt{den520}}& 
    \multicolumn{2}{|c|}{\texttt{warehouse}} &
    \multicolumn{2}{|c|}{\texttt{Paris}} &
    \multicolumn{2}{|c|}{\texttt{random}} &
    \multicolumn{2}{|c|}{\texttt{empty}} &
    \multicolumn{2}{|c|}{\texttt{Berlin}} \\ \hline
    & O& max& O& max& O&max& O&max& O&max& O&max \\ \hline
    Mean improvement & \textbf{9.1\%}& 8.9\%& \textbf{18.3\%}& 16.0\%& 15.5\%& \textbf{24.5\%} & 14.5\%& \textbf{32.7}\%& 19.8\%& \textbf{40.0\%}& 15.5\%& \textbf{17.2\%} \\ \hline
    Median improvement & \textbf{8.1\%}& 7.0\%& \textbf{18.3\%}& 15.2\%& 15.8\%& \textbf{24.4\%}& 11.8\%& \textbf{28.6\%}& 18.3\%& \textbf{29.3\%}& 15.3\%& \textbf{15.9\%} \\ \hline
    Max improvement & 21.6\%& \textbf{27.9\%}& 31.2\%& \textbf{38.3\%}& 26.3\%& \textbf{44.2\%}& 63.6\%& \textbf{76.9\%}& 42.8\%& \textbf{60.0\%}& 24.6\%& \textbf{40.3\%} \\ \hline
    Min improvement & \textbf{2.2}\%& 0.9\%& \textbf{8.8}\%& 5.1\%& \textbf{6.8}\%& 4.9\%& 0.0\%& \textbf{5.8\%}& 5.0\%& \textbf{6.3\%}& \textbf{7.0}\%& 1.0\% \\ \hline \hline
     \# Type-2 edges& \multicolumn{2}{|c|}{25,766} & \multicolumn{2}{|c|}{15,228} & \multicolumn{2}{|c|}{24,506} & \multicolumn{2}{|c|}{1,153} & \multicolumn{2}{|c|}{3,044} & \multicolumn{2}{|c|}{26,025} \\ \hline
    \# Bi-Pairs found& 1,041& \textbf{4,110}& 532& \textbf{2,851}& 1,679& \textbf{6,649}& 69& \textbf{395}& 360& \textbf{1,011}& 1,389& \textbf{5,303} \\ \hline
    \# Used Bi-Pairs & \textbf{68}& 62& \textbf{56}& 55& 120& \textbf{169}& 5.6& \textbf{14}& 38& \textbf{51}& 110& \textbf{128}\\ \hline 

    
    \end{tabular}
    
   \caption{Statistics of BTPG-o and BTPG-max w/ grouping. The number of agents for the six maps selected for the statistics are the largest ones, respectively. o: BTPG-optimized; max: BTPG-max w/ grouping. All data in the bottom block are averages of 10 scenarios for each map. Used Bi-Pairs: Bi-Pairs that are used (reversed) in the simulation.
   }
   \label{tab:Statistics of BTPG}
   \vspace*{-1.2em}
\end{table*}

\section{Empirical Evaluation} \label{sec:empirical-evaluation}

We use the optimal MAPF solver CBSH2-RTC \cite{li_pairwise_2021} to generate the MAPF solution for each MAPF instance. We then convert each MAPF solution into a TPG. We run our proposed method BTPG-max with and without grouping and compare them to the baseline BTPG-o. All BTPG methods have a 10-minute time cutoff. Finally, we simulate the execution policies of TPG and BTPG, where 10\% of the agents have a 30\% chance of being delayed by 5 timesteps at each non-delayed timestep, same as the settings used in the BTPG-o paper~\cite{su2024bidirectional_tpg} for a fair comparison.

We ran a total of 3,900 simulations across six different benchmark maps (\Cref{fig:result}). Each map features 5 to 8 different agent counts, with 10 random instances per agent count. To simulate delays, each instance was run with 10 random seeds. The maximum number of agents per map is the highest number that CBSH2-RTC can solve within a 2-minute time limit. All experiments were run on a PC with a 3.40 GHz Intel i7-14700K CPU and 64 GB RAM. 

\begin{figure}[t]
    \centering
    \vspace{-1.0em}
    \includegraphics[width=0.65\linewidth]{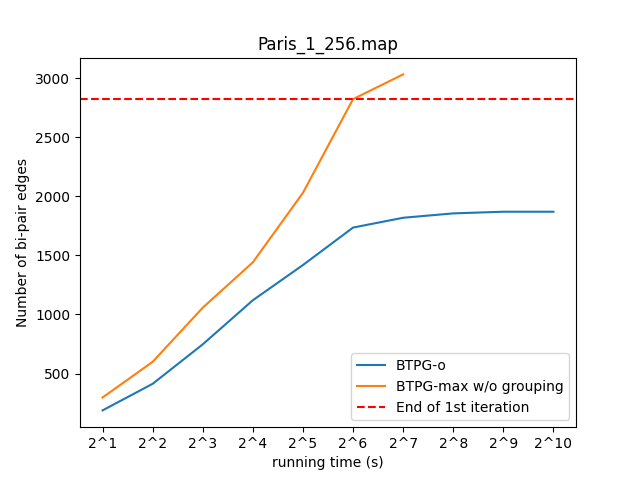}
    \vspace{-1.0em}
    \caption{Anytime behavior of BTPG-o/max w/o grouping in the map \texttt{Paris\_1\_256} with 150 agents.}
    \label{fig:anytime}
    \vspace{-1em}
\end{figure}

\subsubsection{Improvement}
We adopt the improvement calculation from~\citet{su2024bidirectional_tpg}, $ \frac{T_{TPG}-T_{BTPG}}{T_{TPG}-T_{Ideal}}$. $T_{TPG}$ and $T_{BTPG}$ represent the mean execution times for TPG and BTPG, respectively, corresponding to the average number of timesteps agents take to reach their target locations. $T_{Ideal}$ is the average lower bound of execution time, calculated as the sum of execution timesteps specified in the original MAPF solution and the total delay timesteps of the delayed agents, divided by the number of agents.

\Cref{fig:result} plots the results of BTPG-o and BTPG-max with and without grouping. Comparing BTPG-o without grouping (green) and BTPG-max without grouping (red), we see that for all maps BTPG-max is able to find more bi-pairs and has a higher mean improvement than BTPG-o. The performance with grouping shows a different story. Comparing BTPG-o without and with grouping (green vs blue), and BTPG-max without and grouping (red vs yellow), we see the advantage of grouping as it significantly improves both the number of bi-pairs found and the overall improvement. 

Interestingly, we see that the number of bi-pairs found and the mean improvement are not necessarily related. For example, in the warehouse map BTPG-max with grouping finds more bi-pairs than BTPG-o with grouping, but BTPG-o with grouping has a larger mean improvement. \Cref{tab:Statistics of BTPG}, which compares BTPG-o and BTPG-max with grouping, shows a similar story. BTPG-max consistently finds 3-5x more bi-pairs on all 6 maps but only has better median improvements on 4 of the 6 maps. 
We believe this discrepancy is due to the order of type-2 edges that BTPG-o picks. BTPG-o checks type-2 edges in the order they are encountered during execution, while BTPG-max converts more earlier edges into later edges. During execution though, the locations further back on the paths are more likely to be affected by delays. Therefore, in the \texttt{den520} and \texttt{warehouse} map (\Cref{tab:Statistics of BTPG}), even though the number of bi-pairs is larger, the number of used bi-pairs is smaller, so the improvement is smaller. 


\subsubsection{Anytime Performance}
\Cref{fig:anytime} shows the anytime performance of BTPG-o vs BTPG-max without grouping on Paris\_1\_256 with 150 agents. The x-axis is the BTPG creation cut-off time in seconds while the y-axis is the number of bi-pairs found. We see that BTPG-max has strictly superior anytime performance than BPTG-o; BTPG-max finds more bi-pairs faster and saturates at a higher final value. 

The red horizontal dashed line denotes the end of the 1st iteration (after we have gone through all the type-2 edges once) and the start of the next iterations. We see that the vast majority of edges in BTPG-max are found during this first iteration with very few found afterwards.

Overall our results show that BTPG-max w/o grouping is strictly better (in bi-pairs found, mean improvement, anytime performance) than BTPG-o w/o grouping. Also, we find that grouping has a significant impact on performance and augments both BTPG-o and BTPG-max significantly.

\section{Conclusion}
Our work introduces BTPG-max, a method for identifying a locally maximal number of bi-pairs in BTPGs. Our key theoretical contribution is defining all possible non-deadlock cycles in BTPGs, provided in \Cref{the: non-deadlock cycle}, which lays the foundation for distinguishing deadlock cycles in BTPG construction. Building on this, BTPG-max checks all possible deadlock cycles to construct locally maximal BTPGs to maximize execution flexibility. Empirically, BTPG-max consistently discovers more bi-pairs than BTPG-o and, combined with the grouping strategy, yields significantly better execution performance.

There are several possible future directions. One could attempt to make BTPG-max an online method that is called during execution or improve the anytime performance of BTPG-max. Additionally, since checking type-2 edges is a semi-independent operation, a parallel version of BTPG-max would potentially be promising. Given the impact of grouping, more sophisticated methods of grouping could lead to further benefits. Finally, it would also be interesting work to try to incorporate prior knowledge of delays within the BTPG framework.

\bibliography{aaai2026}


\appendix
\begin{figure*}[th!]
    \centering
    \includegraphics[width=\linewidth]{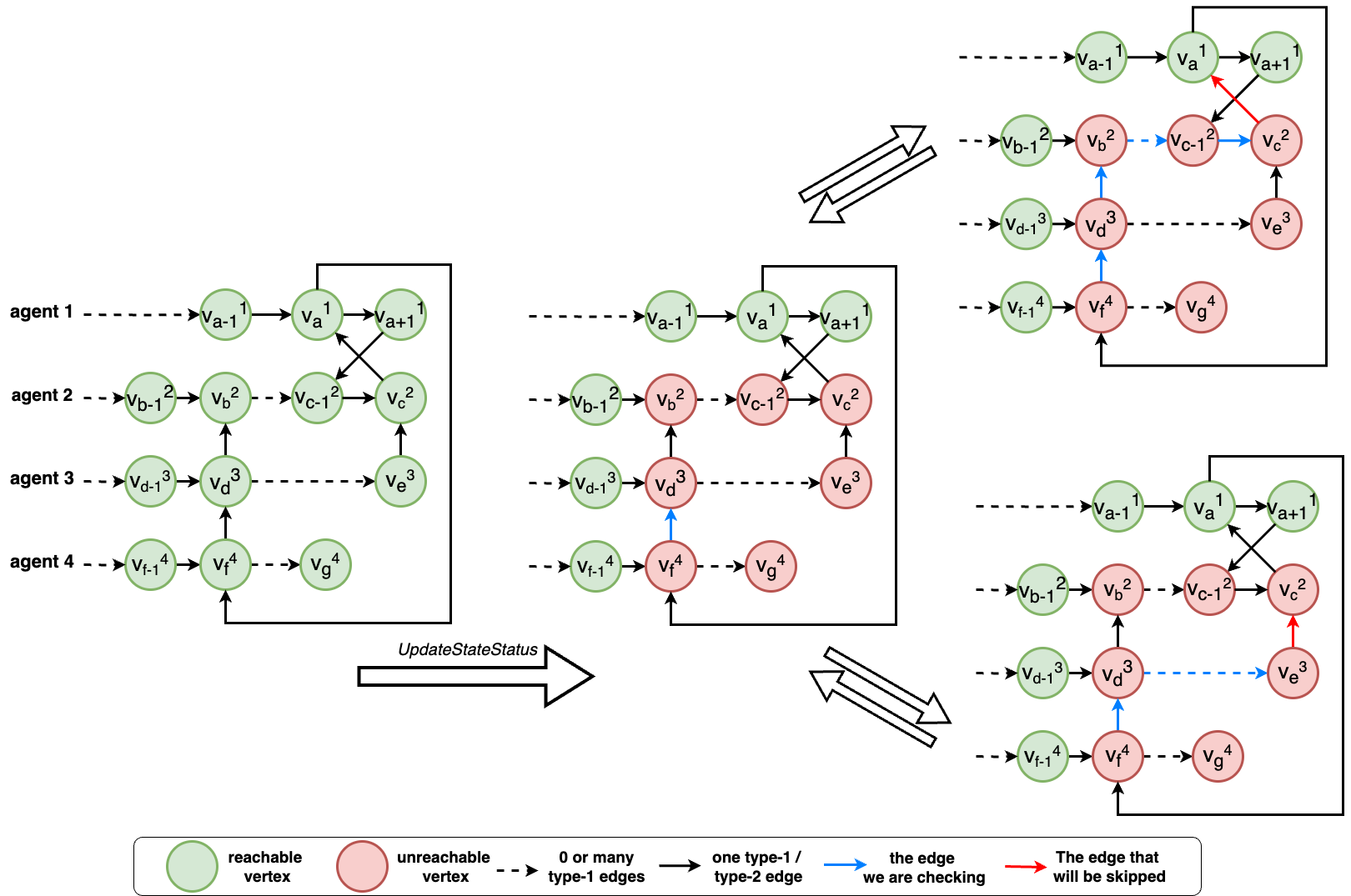}
    \caption{BTPG-max example}
    \vspace{-1em}
    \label{fig:btpg-max}
\end{figure*}
\section{Run-through of BTPG-max}
We will demonstrate the entire process of BTPG-max through \Cref{fig:btpg-max}.
In \Cref{algo: hasCycle}, \texttt{StateStatus} is used to track the conditions of each vertex on the map. At first, every vertex on the map is reachable (\Cref{lin:initial}, \Cref{algo:Constructing BTPG - baseline}). If we want to check if the reversed edge $(v_f^4,v_d^3)$ is valid, first the \texttt{StateStatus} should be updated (\Cref{lin:update},\Cref{algo:Constructing BTPG - baseline}). If the edge $(v_f^4, v_d^3)$ is in the deadlock cycle, then vertices $\{v_x^3|x \ge d\}$ should be unreachable, and vertices $\{v_x^3|x < d-1\}$ should be set to ``unreachable” based on the effect of traversing a type-2 edge. Then, since $v_d^3$ cannot be reached and the edge $(v_d^3,v_b^2)$ is a normal type-2 edge and cannot be satisfied, the vertices $\{v_x^2|x \ge b\}$ are also unreachable. \texttt{UpdateStateStatus} uses simple DFS to update the conditions of each vertice based on the effects of traversing a type-2 edge.

For the top right case, after passing through vertices $v_d^3$, $v_b^2$ and $v_{c-1}^2$, because vertex $v_{c-1}^2$ is unreachable and the edge $(v_c^2,v_a^1)$ is an edge in a bi-pair, BTPG-max will skip this edge (\Cref{line:skip bi-pair}, \Cref{algo: hasCycle}) based on the requirements of traversing a type-2 edge. Similarly, for the below right one, when traversing the vertices $v_d^3$ and $v_e^3$, since the node $v_{c-1}^2$ is unreachable, the edge $(v_e^3,v_c^2)$ will be skipped ( \Cref{line:skip normal}, \Cref{algo: hasCycle}).

\end{document}